\newtheorem{theorem}{Theorem}
\newcommand\blfootnote[1]{%
	\begingroup
	\renewcommand\thefootnote{}\footnote{#1}%
	\addtocounter{footnote}{-1}%
	\endgroup
}
\newcommand{\change}[1]{#1}
\renewcommand{\change}[1]{\textcolor{black}{#1}}
\newcommand{\transpose}[0]{^\text{T}}
\newcommand{\removelatexerror} {\let\@latex@error\@gobble}
\begin{document}

\title{A Utility-Driven Multi-Queue Admission Control Solution for Network Slicing}
\author{
	\IEEEauthorblockN{Bin Han\IEEEauthorrefmark{1}, Vincenzo~Sciancalepore\IEEEauthorrefmark{2}, Di~Feng\IEEEauthorrefmark{3}, Xavier~Costa-Perez\IEEEauthorrefmark{2} and Hans D. Schotten\IEEEauthorrefmark{1}\IEEEauthorrefmark{4}}\\
	\IEEEauthorblockA{
		\begin{minipage}[t]{.44\textwidth}
			\centering
			\IEEEauthorrefmark{1}Technische Universit\"at Kaiserslautern, Germany\end{minipage}
		\hfill
		\begin{minipage}[t]{.53\textwidth}
			\centering
			%		\IEEEauthorblockA{
			\IEEEauthorrefmark{2}NEC Laboratories Europe, Germany\end{minipage}\vspace{1mm}\\
		\begin{minipage}[t]{.44\textwidth}
			\centering
			%		\IEEEauthorblockA{
			\IEEEauthorrefmark{3}Universitat Aut\`onoma de Barcelona, Spain\end{minipage}
		\hfill
		\begin{minipage}[t]{.53\textwidth}
			\centering
			%		\IEEEauthorblockA{
			\IEEEauthorrefmark{4}DFKI GmbH, Germany\end{minipage}	}
}

%For double-blind review
%\author{
%	\IEEEauthorblockN{Author 1\IEEEauthorrefmark{1}, Author 2\IEEEauthorrefmark{2}, Author 3\IEEEauthorrefmark{3}, Author 4\IEEEauthorrefmark{2} and Author 5\IEEEauthorrefmark{1}\IEEEauthorrefmark{4}}\\
%	\IEEEauthorblockA{
%	\begin{minipage}[t]{.45\textwidth}
%		\centering
%		\IEEEauthorrefmark{1}Institution 1\\
%		Institution Address 1\\
%		\{author1, author5\}@institution1.xxx
%	\end{minipage}
%	\hfill
%	\begin{minipage}[t]{.45\textwidth}
%		\centering
%		\IEEEauthorblockA{
%		\IEEEauthorrefmark{2}Institution 2\\
%		Institution Address 2\\
%		\{author2,author4\}@institution2.xxx
%	\end{minipage}\vspace{3mm}\\
%	\begin{minipage}[t]{.45\textwidth}
%		\centering
		%		\IEEEauthorblockA{
%		\IEEEauthorrefmark{3}Institution 3\\
%		Institution Address 3\\
%		author3@institution3.xxx
%	\end{minipage}
%	\hfill
%	\begin{minipage}[t]{.45\textwidth}
%		\centering
%		\IEEEauthorrefmark{4}Institution 4\\
%		Institution Address 4\\
%		author4@institution4.xxx
%	\end{minipage}	}
%}

% The paper headers
%\markboth{Journal of \LaTeX\ Class Files,~Vol.~14, No.~8, August~2015}%
%{Shell \MakeLowercase{\textit{et al.}}: Bare Demo of IEEEtran.cls for IEEE Journals}

% If you want to put a publisher's ID mark on the page you can do it like
% this:
%\IEEEpubid{0000--0000/00\$00.00~\copyright~2015 IEEE}
% Remember, if you use this you must call \IEEEpubidadjcol in the second
% column for its text to clear the IEEEpubid mark.

% use for special paper notices
%\IEEEspecialpapernotice{(Invited Paper)}

% make the title area
\maketitle

% As a general rule, do not put math, special symbols or citations
% in the abstract or keywords.
\begin{abstract}

%XC The emerging techniques of network function virtualization (NFV) and network programmability (SDN) give rise to the novel network slicing paradigm that enables the Fifth Generation (5G) mobile communication networks to run a multi-tenant business over rented ``slices'' of network, known as \emph{Slice as a Service} (SlaaS). Existing studies have demonstrated the flexibility of this business model and its potential in network utility optimization, without any concern of end-service delay or slice admission rate.

%XC
The combination of recent emerging technologies such as network function virtualization (NFV) and network programmability (SDN) gave birth to the Network Slicing revolution. 5G networks consist of multi-tenant infrastructures capable of offering leased network "slices" to new customers (e.g., vertical industries) enabling a new telecom business model: Slice-as-a-Service (SlaaS). In this paper, we aim $i$) to study the slicing admission control problem by means of a multi-queuing system for heterogeneous tenant requests, $ii$) to derive its statistical behavior model, and $iii$) to provide a utility-based admission control optimization. Our results analyze the capability of the proposed SlaaS system to be approximately Markovian and evaluate its performance as compared to legacy solutions.
%XC
\end{abstract}

% Note that keywords are not normally used for peerreview papers.
\begin{IEEEkeywords}
5G, network slicing, NFV, cloud service, resource management, queuing theory
\end{IEEEkeywords}

% For peer review papers, you can put extra information on the cover
% page as needed:
% \ifCLASSOPTIONpeerreview
% \begin{center} \bfseries EDICS Category: 3-BBND \end{center}
% \fi
%
% For peerreview papers, this IEEEtran command inserts a page break and
% creates the second title. It will be ignored for other modes.
\IEEEpeerreviewmaketitle

\section{Introduction}

%XC The huge interest around the softwarization and virtualization of network deployments is attracting different vertical segments---such as Over-The-Top (OTT) applications---alien to this innovative research challenge in the recent years. Such technological means give rise to the a novel paradigm: the \emph{network slicing}~\cite{intro_slic}. Envisioned as an emerging technology initiative with the ability of letting an infrastructure provider to offer a ``slice'' of the network resources (in terms of computational, storage and networking) to network tenants, it has been deeply investigated in the last few years as the main driver towards the next generation of mobile networks, dubbed as 5G networks.

\emph{Network Slicing}~\cite{intro_slic}
is an emerging 5G technology that allows infrastructure providers to offer ``slices'' of resources (computational, storage and networking) to network tenants. In this way a new business game~\cite{networkslic}
is introduced as infrastructure providers (sellers) strategically decide which tenants (buyers) get granted slices to deliver their services. Intuitively, this involves a number of challenges that fall in the economic research field, which, in turn, requires a detailed understanding of the context. In particular, the infrastructure provider may rely on this emerging technology as a means to increase its revenue sources. However, to achieve the overall revenue maximization, advanced admission control policies are required as tenants compete for a limited bunch of available resources.
\blfootnote{This is a preprint. \copyright~2019 IEEE}
%XC Network slicing builds up a new business game~\cite{networkslic} where the infrastructure provider (the seller) might decide which tenants (the buyers) might take over the control of the slice to deliver their own services. Intuitively, this involves a number of challenges that fall in the economic research field, which, in turn, requires a detailed understanding of the context. In particular, the infrastructure provider may rely on this emerging technology as a means to bring additional sources of revenue but, given the scarcity of access resources, it might require advanced admission control policies to pursue the overall revenue maximization.

In this competing environment, a brokering solution may act as a mediator between seller and buyers while providing service level agreements (SLAs) guarantees to granted running slices~\cite{samdanis2016network}. %This has been mathematically modeled as a Markov decision process (MDP) in~\cite{bega2017optimising}, where Q-Learning helps to reduce the complexity of the Markov Decision Process while still providing near-optimal results in affordable time.
Admission control policies will guide the broker in the process of deciding the set of network slices that can be installed on the system and the ones to be rejected. As the number of network slices grows---as envisioned for the next few years~\cite{slicMobicom2018}---it will be necessary to design an automated solution that dynamically decides on the received slice requests while guaranteeing a certain degree of fairness among network tenants. 
Indeed, network slice requests may be queued while waiting for the next available resources, or may be re-issued. 

To properly design such a \emph{slicing brokering process}, a deep understanding of the slice queuing behavior is needed that accounts, for e.g. the average slice duration (based on the slice type), the frequency of slice requests (based on the tenant), etc. This enables a Slice-as-a-Service (SlaaS)~\cite{sciancalepore2017slice} solution that fully supports on-demand slices requests: tenants issue slice requests for given periods of time and decide whether to re-issue the same request upon rejection based on service level agreements. Advanced slicing admission control solutions may have different policies for tenants frequently asking for short-term slices---such as Internet-of-Things (IoT), or crowded event-based network slices---as they will automatically re-issue the same request in the near future, with respect to those that require only few longer network slices---such as Mobile Virtual Network Operators (MVNOs) or Industrial Network Slices~\cite{TII2018_slicing}---which may be probably lost if not accepted. %\,\footnote{Economic loss might be modeled by means of advanced slicing scheme. However, this is out of the scope of this paper.}. 
Moreover, similar as widely recognized in all kinds of queuing systems for service scheduling, tenants may be \emph{impatient} and choose to leave for another available infrastructure provider instead of waiting in queue, especially when the expected waiting time is long. Such behavior shall also be taken into account while designing a slicing admission control solution to mitigate potential revenues loss in case of resource congestion.

While conventional admission control problems have been extensively studied in the literature, we pioneer a new stochastic model for network slicing that leverages on the multi-queuing system to optimally design an admission control of on-demand network slices as well as to orchestrate them once are accepted. This also allows to account for impatient tenant behaviors and heterogeneous network slice characteristics while, at the same time, enforcing given performance metrics, such as fairness between different tenants or between network slice types or utility-based maximization.

%The remainder of this paper is structured as follows. In Section~\ref{sect:model}, we introduce our assumptions and we formulate the network slicing admission control problem for on-demand slice request arrivals. In Section~\ref{sect:ns_queue}, we provide a simple use case to cast our problem into a queuing system. In Section~\ref{sect:multi-queue-controller}, we model the problem as a multi-queue problem where each queue may host slice requests of the same type while waiting for being granted. In Section~\ref{sect:controller},
%we devise the multi-queue controller accounting for additional metrics,
%we devise and analyze the multi-queuing controller with additional metrics by proving the capability of conventional queuing models on such system,
%whereas in Section~\ref{sect:optimization} we propose our optimization strategy. In Section~\ref{sect:perf_eval}, we carry out an exhaustive simulation campaign to prove our findings and validate our model. In Section~\ref{sect:discussions} we discuss the applicability of some assumptions whereas in Section~\ref{sect:rel_work} we outline the main related works on this topic. Finally, in Section~\ref{sect:concl} we provide concluding remarks.

\section{Model design}
\label{sect:model}
We cast our problem into a typical network slicing scenario, where the Mobile Network Operator (MNO) decides to lease infrastructure resources to tenants, willing to pay to take over the control of an independent network slice so as to deliver an end-service to their own users.
Hereafter, we deeply describe our assumptions and mathematically formulate the problem.
\subsection{Resource pool and slice types}
Let us consider a single MNO that possesses a static resource pool of $M$ different resources and offers $N=|\mathcal{N}|$ pre-defined types of slices. Depending on the slice type $n\in\mathcal{N}$, it costs a certain resource bundle to create and maintain a slice.  Let $\mathbf{r}=[r_1,r_2,\dots,r_M]^\text{T}$, $\mathbf{s}=[s_1,s_2,\dots,s_N]^\text{T}$ and $\mathbf{c}_n=[c_{1,n},c_{2,n},\dots,c_{M,n}]^\text{T}$ denote the resource pool, the set of slices under maintenance and the resource bundle required to maintain a slice of type $n\in\mathcal{N}$, respectively. The assigned resources can be then represented as
\begin{equation}
	\mathbf{a}\overset{\Delta}{=}[a_1,a_2,\dots,a_M]^\text{T}=\mathbf{C}\times\mathbf{s},
\end{equation}
where $\mathbf{C}=[\mathbf{c}_1,\mathbf{c}_2,\dots,\mathbf{c}_N]$. 
At any time instance, the MNO cannot simultaneously maintain more slices than its resource pool may support. This constraint is expressed using the \emph{space of resource feasibility}\cite{han2018slice}:
\begin{equation}
	\mathbb{S}=\{\mathbf{s}\vert r_m-a_m\ge 0,\quad\forall 1\le m\le M\}. 
\end{equation}
Note that $\mathbb{S}$ is a finite discrete set, thus the MNO can be characterized as a finite state machine where each slice set under maintenance represents the system state $\mathbf{s}\in\mathbb{S}$.

\subsection{Slice admission in SlaaS}
We consider a certain number of tenants randomly generating network slice requests. Slices requested by a certain tenant are of the same type. For each tenant, the inter-arrival time between two requests is drawn from an exponential distribution. The request arrivals of different tenants are independent and identically distributed (i.i.d.).

Once a request for slice creation is triggered, the MNO makes a binary decision, i.e., the MNO either accepts or declines it. Upon acceptance, the requested slice is created, and continuously maintained so that a corresponding bundle of network resources is occupied until the slice is terminated (at the end of its lifetime) and the resource bundle is released. It should be noted that the constraint of space of resource feasibility forbids the MNO to accept any request when its current state is close to the border of $\mathbb{S}$. In other words, if the current MNO resource pool is close to be saturated by active slices, it does not accept additional network slice requests that might experience a service disruption.
This introduces the well-known concept of \emph{admissibility region}\footnote{The admissibility region has been exhaustively studied in the literature for different use cases and scenarios. We refer the reader to~\cite{bega2017optimising}, where a stochastic admissibility region is derived for a network slicing admission control.} described as
\begin{equation}
	\mathbb{A}=\{\mathbf{s}\vert\mathbf{s}\in\mathbb{S},\exists n:\mathbf{s}+\Delta\mathbf{s}_n\in\mathbb{S}\},
\end{equation}
where $\Delta\mathbf{s}_n$ is the \emph{unit slice incremental vector} of type $n$
\begin{equation}\label{equ:slice_incremental_vector}
\Delta\mathbf{s}_n=[\underbrace{0,\dots,0}_{n-1},1,\underbrace{0,\dots,0}_{N-n}],\quad n\in\{1,2,\dots,N\}.
\end{equation}
We assume that the lifetime of every slice is an i.i.d. exponentially distributed variable and the expected lifetime depends on the slice type. We also consider that the MNO makes every decision according to a consistent slicing policy, i.e., the decision depends only on the type of requested slice $n$ and the current system state $\mathbf{s}$ that defines the current set of slices under maintenance.

\subsection{Delayed reattempt upon request denial}
If a request for slice creation is declined---because of a temporary shortage of available resources due to many other active slices---the tenant is not able to obtain the requested slice immediately. Instead, its request may be sent to the MNO again for a reconsideration after some delay with the hope that some running slice has expired (i.e., resources have been released). Generally, there are two critical features of the delaying mechanism, which should be taken into account: $i$) resource efficiency and $ii$) fairness. The former requires that the chosen mechanism purses the resource pool utilization maximization whereas the latter requires that the expected delay for different requests is normalized.

Two categories of approaches are commonly used to solve this kind of problem:

\noindent\textbf{Random delay}. Every declined request is re-proposed to the MNO after a random delay. This approach provides a good fairness, but generates extra signaling overhead in the control plane being not able to provide the discipline of ``First Come, First Served'' (FCFS), as described in the next section.

\noindent\textbf{Queuing}. Declined requests wait in one or multiple queue(s) for the next opportunity during the MNO's decisional process. This is the most common solution in cloud service scheduling.

Hereafter, we show how a multi-queuing system may be fully exploited to provide insights on the system behaviors and pave the road towards a slicing orchestration solution.

%\section{Slice Set as a Markov Process}
%According to the previous work (submitted to GLOBECOM'19), if resource pool and slice types are static, the request arrivals and slice releases are \textcolor{red}{\textbf{consistent}} memoryless random processes, and the slice strategy is consistent, the set of slices under maintenance is a Markov process.
%
%\section{The Random Access Method}
%\begin{itemize}
%	\item If the accepting rate exceeds the arriving rate (incl. reattempt), it converges to a dynamic balance.
%	\item If the accepting rate remains lower than the arriving rate (incl. reattempt), declined requests will stack to a infinite number, making the arrivals inconsistent - so a maximal number of reattempts is needed.
%	\item One can obtain (calculate?/simulate) the average waiting time and acceptance rate as a function of the arriving rate (excl. reattempts), the accepting rate, the random delay distribution and the maximal reattempts.
%\end{itemize}

\section{Network slicing queuing}
\label{sect:ns_queue}
In the literature a number of various disciplines have been studied to serve the request queues. Among the others, the most common policies are $i$) First come, first served (FCFS), $ii$) Last come, first served (LCFS), $iii$) Random selection for service (RSS) and $iv$) Priority-based (PR). All of them analyze different behaviors and are used to achieve distinct performance metrics. For instance, the LCFS is used to reduce the fairness whereas the priority-based is implemented when there is some high-level preference of the MNO to be considered. RSS shows huge complexity in the implementation without bringing any significant advantage with respect to the others. Hereafter, we focus on the FCFS case. However, any other discipline may be easily adapted to our analysis.
\subsection{Queuing schemes}
We differentiate the queuing systems into two different categories: $i$) single-queue and $ii$) multi-queue systems. When considering the single-queue, only one queue is implemented for all declined requests that need to wait for the next acceptance opportunity. Conversely, the multi-queue system implements multiple queue for declined requests. Specifically, such queues may show different features. We consider homogeneous-mixed queues, wherein each queue consists of requests for slices of different types, and heterogeneous queues, where each queue is specified for only one unique slice type. We next show a simple case-study to justify that the queuing system is suitable for this kind of problems.

\subsection{Resource efficiency: a simple case-study}
Consider a simplified case where $M=1$, $N=2$, $\mathbf{r}=[1]$, $\mathbf{c}_1=[0.6]$, $\mathbf{c}_2=[0.2]$ and $\mathbf{s}=[1,0]\transpose$. The first four requests awaiting in the queue(s) are in the sequential order $[1,1,2,2]$. The MNO is taking a greedy strategy that intends to accept all requests received so far the resource pool supports.

Both in the schemes with a single queue and two homogeneous queues, the MNO fails to accept requests of type~$2$ as the type~$1$ requests are preventing their acceptance. Hence, it has to wait until the currently active slice of type~$1$ is released before it can accept the next request in the queue, although it has both enough idle resource and the intention. The heterogeneous multi-queue scheme, in contrast, enables the MNO to fully utilize its resource pool as shown in Fig.~\ref{fig:schemes}.
\begin{figure}[!htbp]
	\centering
	\includegraphics[width=.5\textwidth]{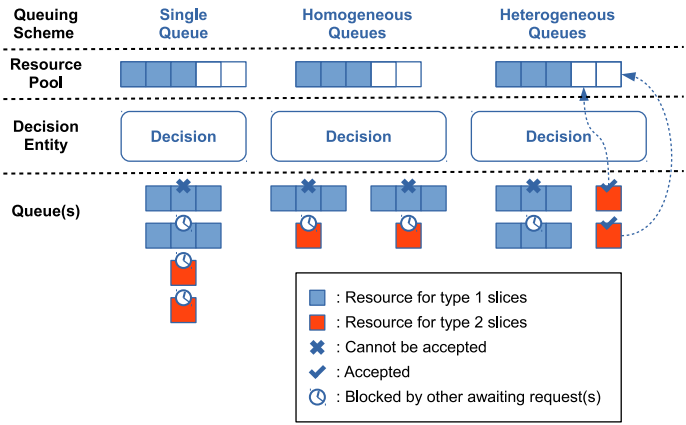}
	\caption{A simple case study on different queuing schemes.}
	\label{fig:schemes}
\end{figure}

Obviously, both the single-queue and the homogeneous multi-queue schemes can also overcome this issue by introducing a ``queue-jumping'' mechanism. However, this may require an extra design of (more complex) logic that automatically (and dynamically) decides which request is allowed to jump in the queue(s). Therefore, in this study we consider the scheme with $N$ FCFS heterogeneous queues.
%\subsection{Random Service, Single Queue}
%\subsection{Random Service, Multiple Queues}

\section{Heterogeneous multi-queue admission control}
\label{sect:multi-queue-controller}
Based on the heterogeneous multi-queue scheme, we propose in this section a novel code to present the MNO's preference for different slice types in variable states, a multi-queue admission controller for SlaaS, and analyze its queue model.

\subsection{Slice-type preference encoder}
Differing from existing studies that do not consider queuing and the single-queue scheme, in the multi-queue scheme, the MNO may receive multiple requests for slices of different types simultaneously. Therefore, instead of making a simple binary decision of accepting or declining one request, it has to either choose one from the simultaneously arriving requests to accept while declining the rest ones, or decline all of them. Especially, with heterogeneous queues, the MNO's preference for some request queue(s) over the others implies its proclivity to some slice type(s) against the others.

For an MNO that offers \change{$N$} different slice types to tenants for request, we can encode an arbitrary preference of the MNO into a \emph{preference vector} of length $N+1$:
\begin{equation}
	\Phi=[\varphi_1,\varphi_2,\dots,\varphi_{N+1}],
\end{equation}
which is a permutation of  $\{0,1,2,\dots,N\}$. The earlier a queue number $1\le n\le N$ occurs in $\Phi$, the more likely the MNO prefers slice type $n$ over the others. Note that $n=0$ denotes reserving resource for potential opportunities in future, so that all requests in the queues with values occurring in $\Phi$ after $0$ will not be served by the MNO at all. 

While being in states on (or close to) the border of space of resource feasibility $\mathbf{s}\in\mathbb{S}-\mathbb{A}$, the MNO cannot accept further request from any queue, hence the preference does not make any impact. Thus, we focus on the admissibility region $\mathbb{A}$ and assume that the MNO's preference is consistent and depends only on its current state $\mathbf{s}\in\mathbb{A}$. Thus, we can characterize the MNO's admission strategy with a $(N+1)\times\vert \mathbb{A}\vert$ \emph{preference matrix} as the following
\begin{equation}\label{equ:phi_matrix}
\begin{split}
	\mathbf{\Phi}&=[\Phi_1,\Phi_2,\dots,\Phi_{\vert\mathbb{A}\vert}]\\
	&=\begin{bmatrix}
	\phi_{1,1}&\phi_{1,2}&\dots&\phi_{1,\vert\mathbb{A}\vert}\\
	\phi_{2,1}&\phi_{2,2}&\dots&\phi_{2,\vert\mathbb{A}\vert}\\
	\vdots&\vdots&\ddots&\vdots\\
	\phi_{N+1,1}&\phi_{N+1,2}&\dots&\phi_{N+1,\vert\mathbb{A}\vert}\\
	\end{bmatrix},
\end{split}
\end{equation}
where each column $\Phi_i$ represents the MNO's preference for different slice types in a specific feasible state in $\mathbb{A}$. 

\subsection{Mechanism overview}\label{subsec:overall_mechanism}
%Let the bijection $\mathbb{A}\to\{1,2,\dots,\vert\mathbb{A}\vert\}$ denoted by $I=I_\mathbb{A}(\mathbf{s})$ 
Let $l_n$ denote the length of the $n^\text{th}$ queue, the decision entity executes the algorithm described in Fig. \ref{fig:algorithm}. The MNO keeps waiting for incoming tenant issues and responses to them upon issue arrivals. If the tenant issues to release a slice of its own, the MNO always releases it. If the tenant requests for a new slice, the request will be pushed into the corresponding queue with respect to the type of requested slice. After responding to the issue, the MNO will recursively serve the request queues in a sequence determined by its admission strategy and active slice set, until no more waiting request can be accepted. Then it stops serving the queues and waits for the next tenant issue.

\begin{figure}[!h]
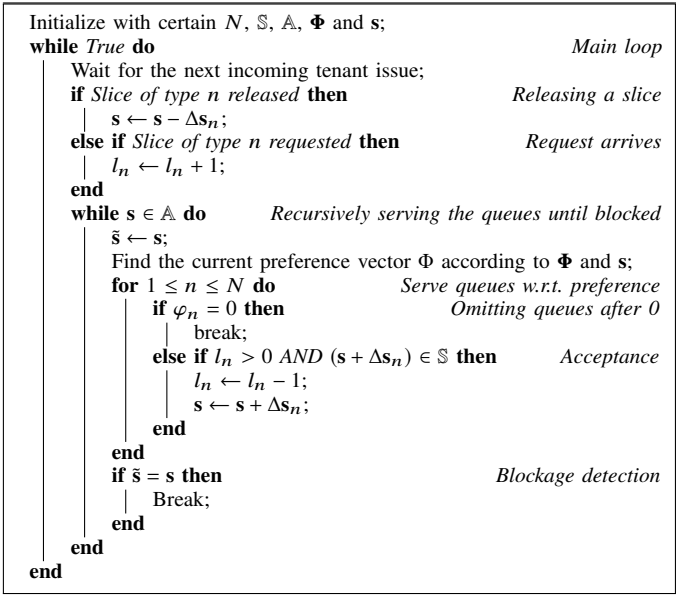

	\removelatexerror
	%	\begin{algorithmic}[1]
	\begin{algorithm}[H]
		\footnotesize
		Initialize with certain $N$, $\mathbb{S}$, $\mathbb{A}$, $\mathbf{\Phi}$ and $\mathbf{s}$\;
		\While(\hfill\emph{Main loop}){True}{
			Wait for the next incoming tenant issue\;
%			$i\gets I_\mathbb{A}(\mathbf{s})$\qquad\emph{For convenience of denotation}\;
			\uIf(\hfill\emph{Releasing a slice}){Slice of type $n$ released}{$\mathbf{s}\gets\mathbf{s}-\Delta\mathbf{s}_n$\;}
			\ElseIf(\hfill\emph{Request arrives}){Slice of type $n$ requested}{
				$l_{n}\gets l_{n}+1$\;				
			}
			\While(\hfill\emph{Recursively serving the queues until blocked}){$\mathbf{s}\in\mathbb{A}$}{
				$\tilde{\mathbf{s}}\gets\mathbf{s}$\;
                Find the current preference vector $\Phi$ according to $\mathbf{\Phi}$ and $\mathbf{s}$\;
				\For(\hfill\emph{Serve queues w.r.t. preference}){$1\le n\le N$}{
					\uIf(\hfill\emph{Omitting queues after 0}){$\varphi_{n}=0$}{break\;}
					\ElseIf(\hfill\emph{Acceptance}){$l_{n}>0$ AND $\left(\mathbf{s}+\Delta\mathbf{s}_{n}\right)\in\mathbb{S}$}{
						$l_n\gets l_n-1$\;
						$\mathbf{s}\gets\mathbf{s}+\Delta\mathbf{s}_n$\;
					}
				}
				\If(\hfill\emph{Blockage detection}){$\tilde{\mathbf{s}}=\mathbf{s}$}{Break\;}
			}
		}	
	\end{algorithm}
% 	\begin{algorithm}[H]
% 		\footnotesize
% 		Initialize with certain $N$, $\mathbb{A}$, $\mathbf{\Phi}$ and $\mathbf{s}$\;
% 		\While(\qquad\emph{Main loop}){True}{
% 			Wait for the next incoming request or slice release\;
% 			$i\gets I_\mathbb{A}(\mathbf{s})$\qquad\emph{For convenience of denotation}\;
% 			\uIf(\qquad\emph{Releasing a slice}){Slice of type $n$ released}{$\mathbf{s}\gets\mathbf{s}-\Delta\mathbf{s}_n$\;}
% 			\ElseIf(\qquad\emph{Request arrives}){Slice of type $\phi_{i,n}$ requested}{
% 				$l_{\phi_{i,n}}\gets l_{\phi_{i,n}}+1$\;
% 			}
% 			\While(\qquad\emph{Recursively serving the queues until blocked}){True}{
% 				$\tilde{\mathbf{s}}\gets\mathbf{s}$\;
% 				\For(\qquad\emph{Serve queues w.r.t. preference}){$1\le n\le N$}{
% 					$i\gets I_\mathbb{A}(\mathbf{s})$\;
% 					\uIf(\qquad\emph{Omitting queues after 0}){$\phi_{i,n}=0$}{break\;}
% 					\ElseIf{$l_{\phi_{i,n}}>0$ and $\left(\mathbf{s}+\Delta\mathbf{s}_{\phi_{i,n}}\right)\in\mathbb{A}$}{
% 						$l_{\phi_{i,n}}\gets l_{\phi_{i,n}}-1$\;
% 						$\mathbf{s}\gets\mathbf{s}+\Delta\mathbf{s}_{\phi_{i,n}}$\;
% 					}
% 				}
% 				\If(\qquad\emph{Block detection}){$\tilde{\mathbf{s}}=\mathbf{s}$}{Break\;}
% 			}
% 		}	
% 	\end{algorithm}
	\caption{The multi-queue slice admission controlling algorithm.}
	\label{fig:algorithm}
\end{figure}

\section{Network slicing controller design}
\label{sect:controller}
We analyze different characteristics of the conventional queuing models, highlighting the novel features applied to our model while designing the network slicing controller. This helps to shed the light on the main advantages and limitations of our novel admission control model.

\subsection{Analysis of inter-acceptance time}
We consider request arrivals of every slice type as an independent Poisson process, so that the inter-arrival time between requests in every queue is an independent exponential random process. Conversely, the request acceptance rate of every queue is jointly determined by the slice releases of all types, and the MNO's preference strategy.

\begin{theorem}\label{theorem:geom_iat}
	Consider a heterogeneous multi-queue slice admission controller that executes the algorithm in Fig. \ref{fig:algorithm} with a consistent preference matrix. The acceptance in different queues are mutually independent Poisson processes, if:	
	 \begin{enumerate*}
		\item the arrivals of new requests and releases of active slices are mutually independent Poisson processes for every individual slice type;
		\item the arrivals of different slice types are mutually independent from each other, the releases of different slice types are mutually independent from each other.
	\end{enumerate*}
	
\end{theorem}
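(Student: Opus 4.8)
The plan is to lift the system to a continuous-time Markov chain on the pair $(\mathbf{s}(t),\mathbf{l}(t))$, where $\mathbf{l}=(l_1,\dots,l_N)$ is the vector of queue lengths, to identify the $N$ acceptance processes as output (jump) streams of this chain, and then to invoke a Burke-type reversibility argument to conclude that these streams are Poisson and jointly independent. The Markovian lift itself is immediate: under hypotheses~1--2 all randomness enters through $2N$ mutually independent Poisson clocks (a rate-$\lambda_n$ arrival clock and a release clock per type $n$), and the serving loop of Fig.~\ref{fig:algorithm} is a deterministic map of the current $(\mathbf{s},\mathbf{l})$ that terminates after finitely many steps, since each accepted request moves $\mathbf{s}$ strictly toward the border of the finite set $\mathbb{S}$. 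Hence the state changes only at arrival/release epochs, and the post-event state is a function of the pre-event state and the event type, so $(\mathbf{s}(t),\mathbf{l}(t))$ is a CTMC on $\mathbb{S}\times\mathbb{Z}_{\ge0}^{N}$.

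Next I would pin down the structure of an acceptance. An acceptance in queue $n$ is exactly a downward jump of $l_n$, always paired with $\mathbf{s}\mapsto\mathbf{s}+\Delta\mathbf{s}_n$; crucially, one release epoch may trigger a whole finite cascade of such jumps, across several queues, processed in the fixed preference order $\Phi$ of the visited states. Restricting to the stationary regime --- which presumes each queue is stable --- the queue lengths are tight, so every queued request is eventually accepted exactly once, and flow conservation forces the long-run type-$n$ acceptance rate to equal $\lambda_n$. The remaining, and real, task is to show that the counting process $B_n$ of type-$n$ acceptances is Poisson($\lambda_n$) and that $B_1,\dots,B_N$ are mutually independent.

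For this I would look for a partial-balance (product-form) stationary law for $(\mathbf{s},\mathbf{l})$: in the $\mathbf{s}$-coordinate the controller is a multi-rate, resource-constrained Kelly-type loss network (slices of type $n$ each consuming the bundle $\mathbf{c}_n$, creation forbidden outside $\mathbb{S}$), coupled to per-type FCFS queues driven by independent Poisson streams, so one can hope to verify the balance equations under which the reversed chain again exhibits $N$ independent Poisson arrival clocks that are independent of the future of the reversed state. Time-reversal then delivers the forward acceptance streams $B_n$ as independent Poisson($\lambda_n$). Failing a clean product form, I would instead try to establish quasi-reversibility of each queue within the network directly: conditioned on the current state, the instantaneous rate of a type-$n$ acceptance --- aggregated over the memoryless release cascade --- is shown to depend neither on the past nor on the contents of the other queues, exploiting that heterogeneous FCFS queuing never lets a type-$m\neq n$ request reorder a type-$n$ one, so that Burke's argument applies stream by stream.

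The hard part is precisely this last step, and within it the cross-queue independence, because the $N$ queues are not dynamically decoupled: they share the single pool $\mathbf{r}$ and their service opportunities are arbitrated by the state-dependent preference matrix $\mathbf{\Phi}$, so independence cannot be obtained from a naïve superposition/thinning argument --- a state-dependent thinning of a Poisson process is in general not Poisson, and the cascaded acceptances turn one Poisson release event into a random number of acceptance points. All of the weight therefore falls on establishing the right partial-balance / quasi-reversibility identities, which is exactly where the two hypotheses are load-bearing: mutually independent Poisson releases keep the service capacity offered to each queue memoryless and disentangled from the arrival histories, while per-type FCFS queuing makes each output stream order-isomorphic to its own Poisson input.
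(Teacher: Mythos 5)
There is a genuine gap: your argument stops exactly where the work begins. Everything up to the Markovian lift of $(\mathbf{s}(t),\mathbf{l}(t))$ and the identification of acceptances as (possibly cascaded) downward jumps of $l_n$ is fine, but the conclusion---that the acceptance streams are Poisson and mutually independent---is deferred to a partial-balance / quasi-reversibility identity that you only \emph{hope} to verify (``one can hope to verify the balance equations'', ``failing a clean product form, I would instead try\dots''). Nothing in the proposal actually establishes that identity, and there are concrete reasons to doubt it holds in the form you need: the admission is arbitrated by a state-dependent preference matrix $\mathbf{\Phi}$ whose $0$-entries reserve capacity (a trunk-reservation-like policy), and loss networks with state-dependent reservation or prioritization are the textbook examples in which reversibility and product form of the Kelly type break down. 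Moreover, a single release epoch can trigger a finite cascade of acceptances across several queues, so the putative ``output streams'' have batch points generated by a common trigger; standard quasi-reversibility/Burke arguments presuppose simple, singly-occurring departure points per class, so the framework you invoke does not apply off the shelf. Since you yourself flag this step as the hard part and carry none of its weight, the proposal is a programme rather than a proof.

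For comparison, the paper takes a different and more elementary route: it classifies controller states $\hat{\mathbf{s}}=[\mathbf{s},l_1,\dots,l_N]$ into transient and steady ones, observes that acceptance of the head-of-line request of queue $n$ is determined by which sequence of future arrival/release events occurs, and then uses the Markov property of the underlying independent Poisson event streams to show that the residual waiting time $t_{\text{w},n}$ of that head-of-line request is memoryless, whence the inter-acceptance times are exponential (geometric in discrete time) and each queue's acceptance process is Poisson. Your reversibility programme, if it could be completed, would be stronger (it would give the joint law of all output streams at once), but as written it neither completes that step nor falls back on the waiting-time/memorylessness argument the paper uses.
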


\begin{proof}
	First, extend the system (MNO) state $\mathbf{s}$ with all queue lengths to obtain the \emph{controller state} $\hat{\mathbf{s}}=[\mathbf{s},l_1,l_2,\dots,l_N]$, and therefore the infinite discrete domain $\hat{\mathbb{A}}=\mathbb{A}\times \mathbb{N}^N$. Let the bijection $\mathbb{A}\leftrightarrow\{1,2,\dots,\vert\mathbb{A}\vert\}$ denoted by $I=I_\mathbb{A}(\mathbf{s})$, we call $\hat{\mathbf{s}}\in\hat{\mathbb{A}}$ a \emph{transient} state if $\exists n\in\mathcal{N}$ such that:
	\begin{empheq}[left=\empheqlbrace]{align}
		&\phi_{I,k}\neq 0,\quad\forall k<n;\\
		&l_n>0;\\
		&\left(\mathbf{s}+\Delta\mathbf{s}_{\phi_{n,I}}\right)\in\mathbb{A}.
	\end{empheq}
	Otherwise, we call $\hat{\mathbf{s}}$ a $\emph{steady}$ state.
	According to the algorithm in Fig. \ref{fig:algorithm}, when the controller is in a transient state it always accepts a request in its queues immediately and therefore keeps jumping to another state until it reaches a steady state. Every transient state leads to one and only one certain steady state. On the other hand, the controller can reasonably (but not always) leave a steady state only when a new request arrives or a slice is released.
	
Thus, given a certain sequence of request arriving and slice releasing events in the next period, we can obtain the transition path of the controller state, and therewith determine whether the first awaiting request in an arbitrary queue will be accepted during that period. Denote the time that the first awaiting request in the $n^\text{th}$ queue still has to wait until it is accepted as $t_{\text{w},n}$, it yields that
	\begin{equation}
		\text{Prob}(t_{\text{w},n}>T)=1-\prod\limits_{\mathbf{e}\in\mathbb{E}_n}\text{Prob}(Arr(T)=\mathbf{e}),
	\end{equation}
	where $\mathbb{E}_n$ is the set of all event sequences that can lead to an acceptance of request in the $n^\text{th}$ queue, and $Arr(T)$ denotes the event sequence arriving in the next period of $T$. As the request arrivals and releases of different slice types are mutually independent Poisson processes, we know that all $\mathbf{e}\in\mathbb{E}_T$ are also approximately Poissonian (proven as a feature of \emph{dependent trials} \cite{parzen1960modern,chen1975poisson}). Thus, due to the Markovian behavior of Poisson processes, we can write the following
	\begin{equation}
	\begin{split}
		\text{Prob}(Arr(T)=\mathbf{e})=\text{Prob}(Arr(T+t)=\mathbf{e}~\vert~Arr(t)\neq\mathbf{e})\\
		\forall[\mathbf{e},t,T]\in\left(\mathbb{E}_T\times\mathbb{N}^2\right),
	\end{split}
	\end{equation}
	and thus
	\begin{equation}
	\begin{split}
		&\text{Prob}(t_{\text{w},n}>T+t)\\
		=&1-\prod\limits_{\mathbf{e}\in\mathbb{E}_n}\text{Prob}(Arr(T+t)=\mathbf{e}~\vert~Arr(t)\neq\mathbf{e})\\
		=&\text{Prob}(t_{\text{w},n}>T+t~\vert~t_{\text{w},n}>t),\quad\forall[t,T]\in\mathbb{N}^2.
	\end{split}
	\label{equ:memoryless_waiting_time}
	\end{equation}
Eq.~\eqref{equ:memoryless_waiting_time} implies that the remaining waiting time for acceptance of the first request in queue $n$ is \emph{memoryless}. Due to the fact that the only two classes of memoryless distributions are exponential (continuous) and geometric (discrete) distributions, we can assert that the request acceptance in every queue is a Poisson process.
\end{proof}

\subsection{Queuing-theoretic analysis}
While considering both request arrivals and request acceptances (service) as Poisson processes, every request queue is a classic $\text{M}/\text{M}/1$ queuing system, known as single-server birth-death system~\cite{shortle2018fundamentals}. Hence, many features of birth-death model can be directly applied.
\subsubsection{Little's Formula}
For slice type (queue) $n$, given its request arrival rate $\lambda_n$, according to the famous Little's formula\cite{little1961proof} there is
\begin{equation}\label{equ:little}
	L_n=\lambda_n\overline{W}_n,
\end{equation}
where $L_n$ and $\overline{W}_n$ represent the mean length of queue $n$ and the average waiting time in queue $n$, respectively.

\subsubsection{Steady Queue State Probability}
Given the request arrival rate $\lambda_n$ and acceptance rate $\mu_n$ of queue $n$, the probability that the queue steadily consists of $l$ requests at an arbitrary time instant is geometrically distributed, i.e.,
\begin{equation}\label{equ:steady_queue_state}
	p_n(l)=(1-\rho)\rho^l,
\end{equation}
where $\rho_n=\lambda_n/\mu_n<1$ is the \emph{work load rate} of queue $n$.

\subsubsection{Waiting Time Distribution}
The probability density function (PDF) of an arbitrary type-$n$ request's waiting time is
\begin{equation}\label{equ:pdf_waiting_time}
	f(W_n)=\begin{cases}
		0& W_n<0\\
		(\mu_n-\lambda_n)e^{-(\mu_n-\lambda_n)W_n}&W_n\ge 0
	\end{cases},
\end{equation}
and the cumulative density function (CDF) is
\begin{equation}\label{equ:cdf_waiting_time}
	F(W_n)=\begin{cases}
		0& W_n<0\\
		1-e^{-(\mu_n-\lambda_n)W_n}&W_n\ge 0
	\end{cases}.
\end{equation}

\subsection{Extension: impatient tenants}
From Eqs. (\ref{equ:little}--\ref{equ:cdf_waiting_time}) it is clear that both $L_n$ and $W_n$ converge only when $\lambda_n<\mu_n$. Otherwise, when the request acceptance rate is lower than the arrival rate in queue $n$, the queue length will infinitely increase, and therefore also the mean waiting time. This is known as the necessary and sufficient condition of statistical equilibrium in queuing processes, as stated and proven by \emph{Kendall} in work~\cite{kendall1951some}.

However, in a real slice admission controller, there are various situations where $\lambda_n\ge\mu_n$ for some $n$, including cases
\begin{itemize}
	\item when the controller is specified with an inappropriate strategy, so that requests in the queue $n$ is rarely or even never accepted despite of resource feasibility;
	\item when the release rates of active slices are low, so that the resource pool fails to support a sufficiently high $\mu_n$ regardless of any admission strategy.
\end{itemize}
There are two mechanisms that prevent queuing systems from such divergence.
On the one hand, the system may force to truncate a queue at some maximal length, and forbid this queue to take any new request before it is shortened. On the other hand, the clients may lose patience while waiting, and leave the queues before being served (e.g., for looking for some other MNO with resource availability). In the scenario of SlaaS, the system (MNO) is probably very cautious with refusing requests,  while the waiting time can be critical to the customers (tenants). Therefore, here we consider no queue truncation but queues with impatience.

Usually, impatience in queues can occur in three different behaviors:
$i$) balking, i.e. customers being reluctant to join a queue upon arrival, $ii$) reneging, i.e. customers leaving the queue after joining and waiting, and $iii$) jockeying from long lines to shorter ones. As the heterogeneous multi-queue design disables jockeying, here we consider the balking and reneging phenomena.

\noindent{\bf Balking Model.} The phenomenon of balking can be modeled in such a way, that every arrival request of slice type $n$ enters the queue with a probability $b_n$, which is a monotonically decreasing function of the current queue length $l_n$.
\emph{Ancker} and \emph{Gafarian} have proposed two different balking models in~\cite{ancker1963some1,ancker1963some2}. The first model considers a linear balking factor $1-b_n=l_n/l_{n,\max{}}$, where $l_{n,\max{}}$ is the upper bound of $l_n$ for queue truncation. The second one considers a non-linear balking factor as follows
\begin{equation}\label{equ:inv_prop_balking_model}
	1-b_n=\begin{cases}
		0&l_n=0\\1-\beta_n/l_n&l_n\in\mathbb{N}^+
	\end{cases},
\end{equation}
where $\beta_n\in[0,1]$ measures the willingness of tenants requesting type-$n$ slices to wait. In cases that the tenant has knowledge about $\mu_n$, \emph{Shortle} et al. suggest another non-linear balking model $1-b_n=1-e^{-\beta_n l_n/\mu_n}$ where $\beta_n>0$ \cite{shortle2018fundamentals}. Here we consider the hyperbolic balking model described by Eq.~\eqref{equ:inv_prop_balking_model}.

\noindent{\bf Reneging  Model.} The phenomenon of reneging can be modeled by randomly assigning an individual maximal waiting time to every request when it joins the queue. The request will leave the queue after that maximal waiting time if it has not been accepted yet. Following \emph{Ancker} and \emph{Gafarian} \cite{ancker1963some2}, we consider the maximal waiting time for every type-$n$ request as an exponential random variable $W_{\max{},n}\sim\text{Exp}(\alpha_n)$, where $1/\alpha_n>0$ is the mean maximal waiting time in queue $n$.

\subsection{Performances with balking and reneging}
It should be noted that the balking and reneging processes are with memory, leading to a non-Markovian behavior of request acceptances. However, under low balking and reneging rates, this impact can be negligible and the acceptance process can still be approximated as Poissonian. When the balking and reneging rates rise to significant levels, the memory of acceptance process shall be considered, as demonstrated in Section~\ref{subsec:geometric_iat} by means of simulations.

Under a combination of hyperbolic balking and exponential reneging, the steady state probability of having $l$ requests in the queue $n$ is 
\begin{equation}
	p_n(l)=\begin{cases}
		\frac{1}{1+(\delta_n)^{1-\gamma_n/2}[\Gamma(\gamma_n)/\beta_n]I_{\gamma_n}(2\sqrt{\delta_n})}&l=0\\\\
		\frac{\delta_n^lp_n(0)}{\beta_n(l-1)!\prod_{j=0}^{l-1}(\gamma_n+j)}&l\in\mathbb{N}^+
	\end{cases},
\end{equation}
where $\gamma_n=\mu_n/\alpha_n$, $\delta_n=\lambda_n\beta_n/\alpha_n$,  $I_{\gamma_n}(\cdot)$ is the modified Bessel's function of the first kind and order $\gamma_n$.

Meanwhile, we are interested in three different distributions of waiting time spent in a queue $n$: $i$) $f_\text{a}(W_n)$ for requests that are eventually accepted, $ii$) $f_\text{r}(W_n)$ for requests that renege and $iii$) $f_\text{q}(W_n)$ for all requests that join the queue.
Let us define $A_n$ and $J_n$ as the events of request being accepted and joining the queue $n$, respectively. There are 
\begin{align}
P(A_n)&=\frac{[1-p_n(0)]\beta_n\gamma_n}{\delta_n},\\
P(A_n,J_n)&=\frac{[1-p_n(0)]\beta_n\gamma_n}{\delta_n}-p_n(0),\\
P(A_n\vert J_n)&=\frac{\Gamma(\gamma_n+1)I_{\gamma_n}\left(2\sqrt{\delta_n}\right)-\left(\sqrt{\delta_n}\right)^{\gamma_n}}{\sqrt{\delta_n}\Gamma(\gamma_n)I_{\gamma_n-1}\left(2\sqrt{\delta_n}\right)-\left(\sqrt{\delta_n}\right)^{\gamma_n}}.
\end{align}
It can be obtained that
{%\scriptsize
\begin{align}
	f_\text{a}(W_n)&=\frac{p_n(0)\lambda_n\beta_ne^{-(\mu_n+\alpha_n)W_n}I_1\left[2\sqrt{\delta_n(1-e^{-\alpha_nW_n})}\right]}{P(A_n,J_n)\sqrt{\delta_n\left(1-e^{-\alpha_nW_n}\right)}},\\
	f_\text{r}(W_n)&=\alpha_ne^{-\alpha_nW_n}\frac{1-P(A_n\vert J_n)g(W_n)}{1-P(A_n\vert J_n))},\\
	f_\text{q}(W_n)&=P(A_n\vert J_n)\left[f_\text{a}(W_n)-\alpha_ne^{-\alpha_nW_n}g(W_n)\right]+\alpha_ne^{-\alpha_nW_n},
\end{align}}
where $g(W_n)=\int_{0}^{W_n}e^{\alpha_n\xi}f_\text{a}(\xi)\text{d}\xi$.

The expectations of waiting times are therefore
\begin{align}
\overline{W}_{\text{a},n}&\frac{p_n(0)}{P(A_n,W_n)}\sum_{i=1}^{+\infty}\left[\frac{\delta_n^i}{i!}\prod_{j=1}^{n}(\gamma_n+j)\right]\sum_{k=1}^{i}\frac{1}{\gamma_n+k},\\
\overline{W}_{\text{r},n}&=\frac{1}{\alpha_n}-\frac{P(A_n\vert W_n)\overline{W}_{\text{q},n}}{1-P(A_n\vert W_n)},\\
\overline{W}_{\text{q},n}&=\frac{1-P(A_n\vert W_n)}{\alpha_n}.
\end{align}

\section{Strategy optimization}
\label{sect:optimization}
In slice admission control, there are various performance metrics that may include: the overall network utility rate, the admission rate and the average request waiting time.

The network utility of a slice can be differently defined, such as the periodical payment that the MNO receives from the tenant, or the generated network throughput, etc. It is common to consider the utility rate of a slice as determined by the slice type, and the overall network utility rate at any time instant $t$ as the sum of utility rates of all slices under maintenance:
\begin{equation}\label{equ:instant_net_utility_rate}
	u_\Sigma(t)=\sum_{n=1}^{N}s_n(t)u_n,
\end{equation}
where $s_n(t)$ is the number of type-$n$ slices under maintenance at time $t$, and $u_n$ is the utility rate of every type-$n$ slice. In long term, the average overall network utility rate can be estimated from the acceptance and releasing rates of different slice types:
\begin{equation}\label{equ:mean_net_utility_rate}
	\overline{u}_\Sigma=\sum_{n=1}^{N}\frac{\mu_n u_n }{\eta_n},
\end{equation}
where $\eta_n$ is the releasing rate per type-$n$ slice.

The average waiting time of all requests in queues is
\begin{equation}
	\overline{W}_\text{q}=\frac{\sum\limits_{n=1}^{N}\overline{W}_{\text{q},n}L_n}{\sum\limits_{n=1}^{N}L_n}.
\end{equation}
The overall admission rate is the following
\begin{equation}
	\overline{P}(A)=\frac{\sum_{n=1}^N\lambda_nP(A_n)}{\sum_{n=1}^N\lambda_n}.
\end{equation}

All three criteria are determined by the request behavior parameters $\alpha_n,\beta_n,\lambda_n$ and the acceptance rate $\mu_n$. Given a certain combination of $[\alpha_n,\beta_n,\lambda_n,\eta_n]$, where $1/\eta_n$ is the average lifetime of type $n$ slices, $\mu_n$ is uniquely determined by the MNO's strategy, i.e. by the preference matrix $\mathbf{\Phi}$. Hence, with consistent behaviors of request arrival and slice releasing, we can optimize either of them by selecting the best $\mathbf{\Phi}$.

A major challenge for analysis exists in the complex relation between the acceptance rates $[\mu_1,\mu_2,\dots,\mu_N]$  and the strategy $\mathbf{\Phi}$, as $\mathbf{\Phi}$ does not directly imply the MNO's action or statistics, but only its preference. 

Nevertheless, if the steady-state probability of queue lengths $p_n(l)$, as defined in Eq.~\eqref{equ:steady_queue_state}, is known or measurable for all $n\in\mathcal{N}$, we can estimate $\mu_n$ for all $n$ with respect to $\mathbf{\Phi}$ and the initial state $\mathbf{s}_\text{init}$ as follows.

First, define a bijection $\mathbb{S}\leftrightarrow\{1,2,\dots,\vert\mathbb{S}\vert\}$ as $J=J_\mathbb{S}(\mathbf{s})$ where $J_\mathbb{S}(\mathbf{s})=I_\mathbb{A}(\mathbf{s})$ for all $\mathbf{s}\in\mathbb{A}$. Then extend the definitions in Eqs. (\ref{equ:slice_incremental_vector}), (\ref{equ:phi_matrix}) and (\ref{equ:steady_queue_state}) with
\begin{align}
	&\Delta\mathbf{s}_0=\underbrace{[0,0,\dots,0]}_N,\\
	&\tilde\phi_{i,j}=\begin{cases}
		0&j>\vert\mathbb{A}\vert\\
		\phi_{i,j}&j\le\vert\mathbb{A}\vert
	\end{cases},\forall i\in\{1,2,\dots,N+1\},\\
	&p_{0}(0)=0,
\end{align} 
respectively. The probability of state transition from any $\mathbf{s}\in\mathbb{S}$ to $\mathbf{s}+\Delta\mathbf{s}$ can be then calculated as
\begin{equation}
\text{Prob}(\mathbf{s}\to\mathbf{s}+\Delta\mathbf{s}_n)=\prod_{k=1}^{n-1}p_{\tilde\phi_{k,J}}(0)(1-p_{\tilde\phi_{n,J}}(0)).
\end{equation}

Thus, when the initial state $\mathbf{s}_\text{init}$ is known, we can obtain the long-term probability distribution of system state $\mathbf{s}$ as
\begin{equation}
	\text{Prob}(\mathbf{s}_j~\vert~\mathbf{s}_\text{init}=\mathbf{s}_i)=\lim\limits_{K\to\infty}\frac{1}{K}\sum_{k=0}^{K}[\mathbf{\Psi}^k]_{i,j},
\end{equation}
where $\mathbf{\Psi}$ is the transition matrix:
\begin{equation}
	\mathbf{\Psi}=\begin{bmatrix}
	\Psi_{1,1} &\Psi_{1,2} & \dots & \Psi_{1,\vert\mathbb{S}\vert}\\
	\Psi_{2,1} &\Psi_{2,2} & \dots & \Psi_{2,\vert\mathbb{S}\vert}\\
	\vdots&\vdots&\ddots&\vdots\\
	\Psi_{\vert\mathbb{S}\vert,1} &\Psi_{\vert\mathbb{S}\vert,2} & \dots & \Psi_{\vert\mathbb{S}\vert,\vert\mathbb{S}\vert}\\
	\end{bmatrix},
\end{equation}
and $\Psi_{i,j}=\text{Prob}(\mathbf{s}_i\to\mathbf{s}_j)$. 

More generally, if not the exact value but the  probability distribution of  the initial state is available as $P_{\text{init}}=[p_{\text{init}}(\mathbf{s}_1),p_{\text{init}}(\mathbf{s}_2),\dots,p_{\text{init}}(\mathbf{s}_{\vert\mathbb{S}\vert})]$, the long-term probability distribution $\mathbf{s}$ is the following
\begin{equation}
\text{Prob}(\mathbf{s}_j~\vert~P_\text{init})=\lim\limits_{K\to\infty}\frac{1}{K}\sum_{k=0}^{K}\sum_{i=1}^{\vert\mathbb{S}\vert}p_{\text{init}}(\mathbf{s}_j)[\mathbf{\Psi}^k]_{i,j}.
\end{equation}

We can obtain the expected active slice number $\overline{s}_n$ of every slice type $n$ as a function of $\mathbf{\Psi}$ and thus, as a function of $\mathbf{\Phi}$. Now, recalling Eqs.~(\ref{equ:instant_net_utility_rate}--\ref{equ:mean_net_utility_rate}) it yields that
\begin{equation}
	\overline{s}_n=\frac{\mu_n}{\eta_n},
\end{equation}
and then we can write the following
\begin{equation}\label{equ:cost_function}
\begin{split}
	\mu_n&=\frac{\overline{s}_n}{\eta_n}=\frac{\sum\limits_{\mathbf{s}\in\mathbb{S}}\text{Prob}(\mathbf{s}~\vert~P_\text{init})s_n}{\eta_n}\\
	&=\frac{1}{\eta_n}\sum\limits_{\mathbf{s}\in\mathbb{S}}\lim\limits_{K\to\infty}\frac{1}{K}\sum_{k=0}^{K}\sum_{i=1}^{\vert\mathbb{S}\vert}p_{\text{init}}(\mathbf{s}_j)[\mathbf{\Psi}^k]_{i,j}.
\end{split}
\end{equation}

Based on this analytical expression, we are able to optimize $[\mu_1,\mu_2,\dots,\mu_n]$ with respect to $\mathbf{\Phi}$. However, it is evident that Eq.~\eqref{equ:cost_function} is non-convex w.r.t. $\mathbf{\Phi}$, which prohibits analytical solution of the global optimum. On the other hand, the overall domain size of $\mathbf{\Phi}$ is $2^{(N+1)\vert\mathbb{A}\vert}$, which can assume unaffordable high values for any realistic dimension of $\vert\mathbb{A}\vert$ in practical networks, making the exhaustive search impossible. This is an integer linear programming (ILP) problem that is proven to be NP-Hard, therefore advanced machine learning and heuristic search methods are needed to solve it with affordable efforts of computation.

%
%However, there is a major challenge that the acceptance rates $[\mu_1,\mu_2,\dots,\mu_N]$ are not convex about $\mathbf{\Phi}$, disabling analytical solution of the optimum. 
%
%\todo{\begin{itemize}
%\item Both $P(A_n)$ and $\mu_n$ are determined by $\mathbf{\Phi}$ and tightly coupled with each other
%\item Assume we have the steady-state probability of queue lengths $p_n(l)$ for all $n\in\{1,2,\dots,N\}$ as defined in (\ref{equ:steady_queue_state}), we can calculate the state transition probability from $\Phi_i$ where $I=I_\mathbb{A}(\mathbf{s})$ is defined in Sec. \ref{subsec:overall_mechanism} by
%\begin{equation}
%	\text{Prob}(\mathbf{s}_i\to\mathbf{s}_i+\Delta\mathbf{s}_n)=\prod_{k=1}^{n-1}p_{\phi_{k,i}}(0)(1-p_{\phi_{n,i}}(0)),
%\end{equation}
%where we define $p_{0}(0)=0$ in addition to (\ref{equ:steady_queue_state}), and $\Delta\mathbf{s}_0=\underbrace{[0,0,\dots,0]}_N$ in addition to (\ref{equ:slice_incremental_vector}). Thus, we can calculate an unique transition probability matrix of the active slice set with respect to an arbitrary preference matrix $\mathbf{\Phi}$, and therewith design a heuristic optimization method
%\item Remark: from the transition prob. matrix and initial distribution or arbitrary initial state we can have the state prob. distribution of $\mathbf{s}$. Going back to Eqs. (\ref{equ:instant_net_utility_rate},\ref{equ:mean_net_utility_rate}) we know $\overline{s}_n=\frac{\mu_n}{\eta_n}$. As $\eta_n$ is known, this implies $\mu_n$ as function of  $\mathbf{\Phi}$ for all $n\in\{1,2,\dots,N\}$
%\item Try ANN?.
%\end{itemize}}

\section{Numerical simulations}
\label{sect:perf_eval}
To carry out simulations in a consistently specified environment, we consider an MNO with a two-dimensional ($M=2$) normalized resource pool $\mathbf{r}=[r_1,r_2]=[1,1]$. $N=2$ slice types are defined in two service demand scenarios, as shown in Tab. \ref{tab:slice_profiles}. Note that $\alpha_n$ and $\beta_n$ are only applicable when the simulation considers balking and reneging, respectively.
\begin{table}[!htpb]
	\centering
	\begin{tabular}{c|c|c|c|c|c|c}
		\toprule[1.5px]
		\textbf{Type ($n$)}&$\mathbf{c}_n$&$\lambda_n$&$1/\eta_n$&$u_n$&$\alpha_n$&$\beta_n$\\\hline
		\multirow{2}{*}{1}&\multirow{2}{*}{$[0.01,0.05]$}&2 (Scenario 1)&\multirow{2}{*}{5}&\multirow{2}{*}{1}&\multirow{4}{*}{1}&\multirow{4}{*}{0.02}\\\cline{3-3}
		&&6 (Scenario 2)&&&&\\\cline{1-5}
		\multirow{2}{*}{2}&\multirow{2}{*}{$[0.2,0.04]$}&0.5 (Scenario 1)&\multirow{2}{*}{2}&\multirow{2}{*}{10}&&\\\cline{3-3}
		&&1.5 (Scenario 2)&&&&\\
		\bottomrule[1.5px]
	\end{tabular}
	\caption{Specifications of two reference slice types}
	\label{tab:slice_profiles}
    \vspace{-4mm}
\end{table}

\subsection{Verification of geometric IAT distribution}\label{subsec:geometric_iat}
In case of patient tenants, Theorem \ref{theorem:geom_iat} can also be verified through numerical simulations. We take the slice specifications in scenario $1$, disable balking and reneging events, and randomly generate $500$ slicing strategies. For each strategy, $20$ rounds of Monte-Carlo tests are executed. In each testing round, an MNO with a 2-queue slice admission controller is initialized to a random but fully resource-utilized state, and then operates under the consistent strategy for $40$ operations periods. Then we investigate the distribution of inter-acceptance time (IAT) for each queue, and fit the measurements with geometric distributions, which is the discrete-time version of exponential distribution. A sample result is shown in Fig.~\ref{subfig:geometric_acceptance_sample}, where a good fitting performance can be observed.

To quantitatively evaluate the fitness, we compute the Kullback-Leibler divergence (KLD)~\cite{kullback1997information} for every strategy:
\begin{equation}
	D_\text{KL}(P_\text{IAT}~\vert~\text{Geom.})=\sum\limits_{k=0}^{\infty}p_\text{IAT}(k)\log\frac{p_\text{IAT}(k)}{(1-\hat{p})^k\hat{p}},
\end{equation}
where $p_\text{IAT}(k)$ is the empirical probability mess function (PMF) of the measured IAT, and $(1-\hat{p})^k\hat{p}$ is the geometric PMF with fitted parameter $\hat{p}$. KLD is an indicator of fitness between two distributions, which equals $0$ for two identical distributions and approaches to $1$ for two completely irrelevant distributions. The KLD distribution over all $500$ tested random strategies is depicted in Fig.~\ref{subfig:geometric_acceptance_KLD}, which shows a satisfactory fitness for both queues (slice types).
\begin{figure}[!htbp]
	\centering
	\begin{subfigure}{.41\textwidth}
		\centering
		\includegraphics[width=\textwidth]{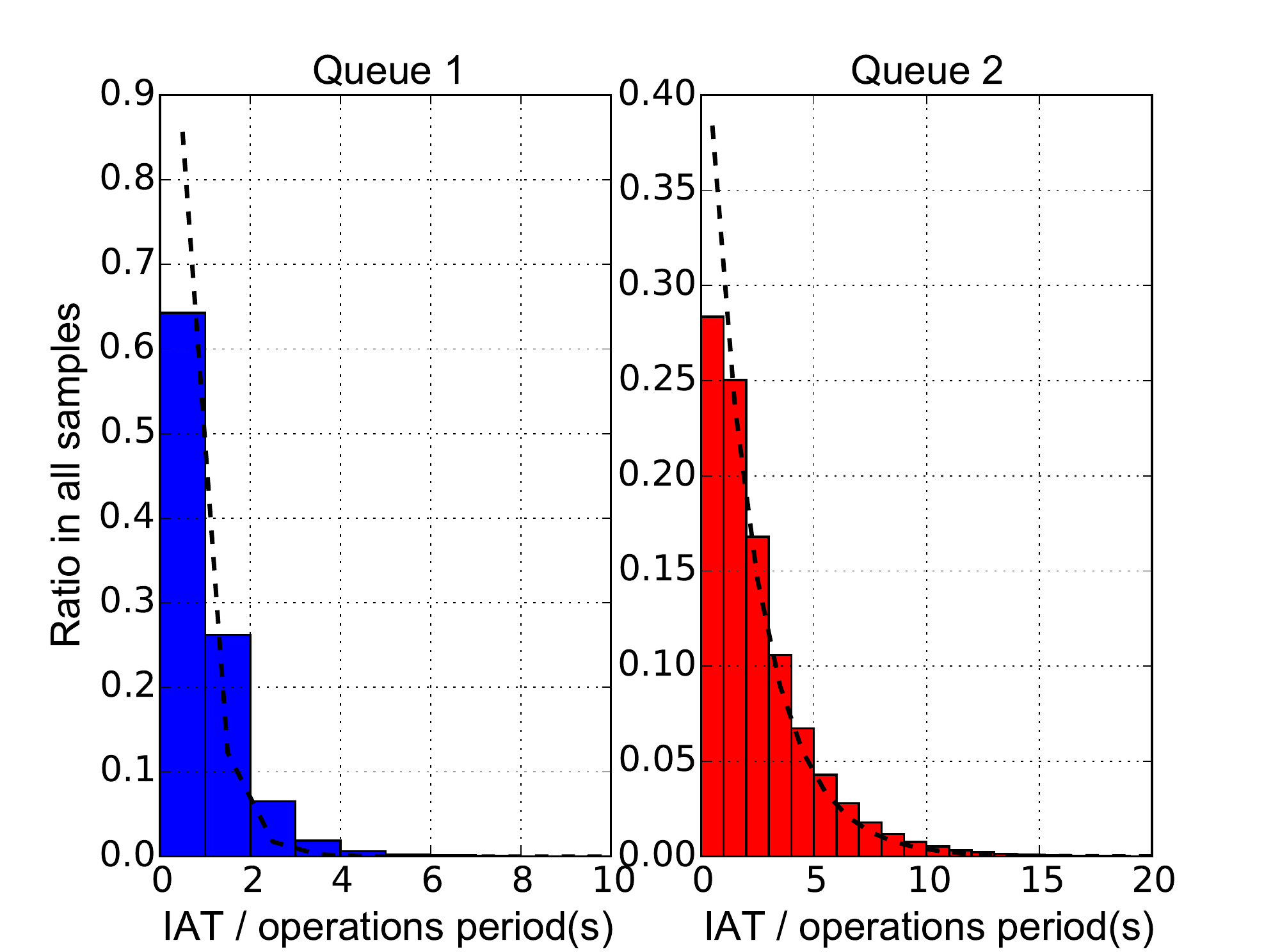}
		\caption{The distribution of inter-acceptance time in two different queues under a random strategy, fitted as geometric distribution.}
		\label{subfig:geometric_acceptance_sample}
	\end{subfigure}
	\begin{subfigure}{.4\textwidth}
		\centering
		\includegraphics[width=\textwidth]{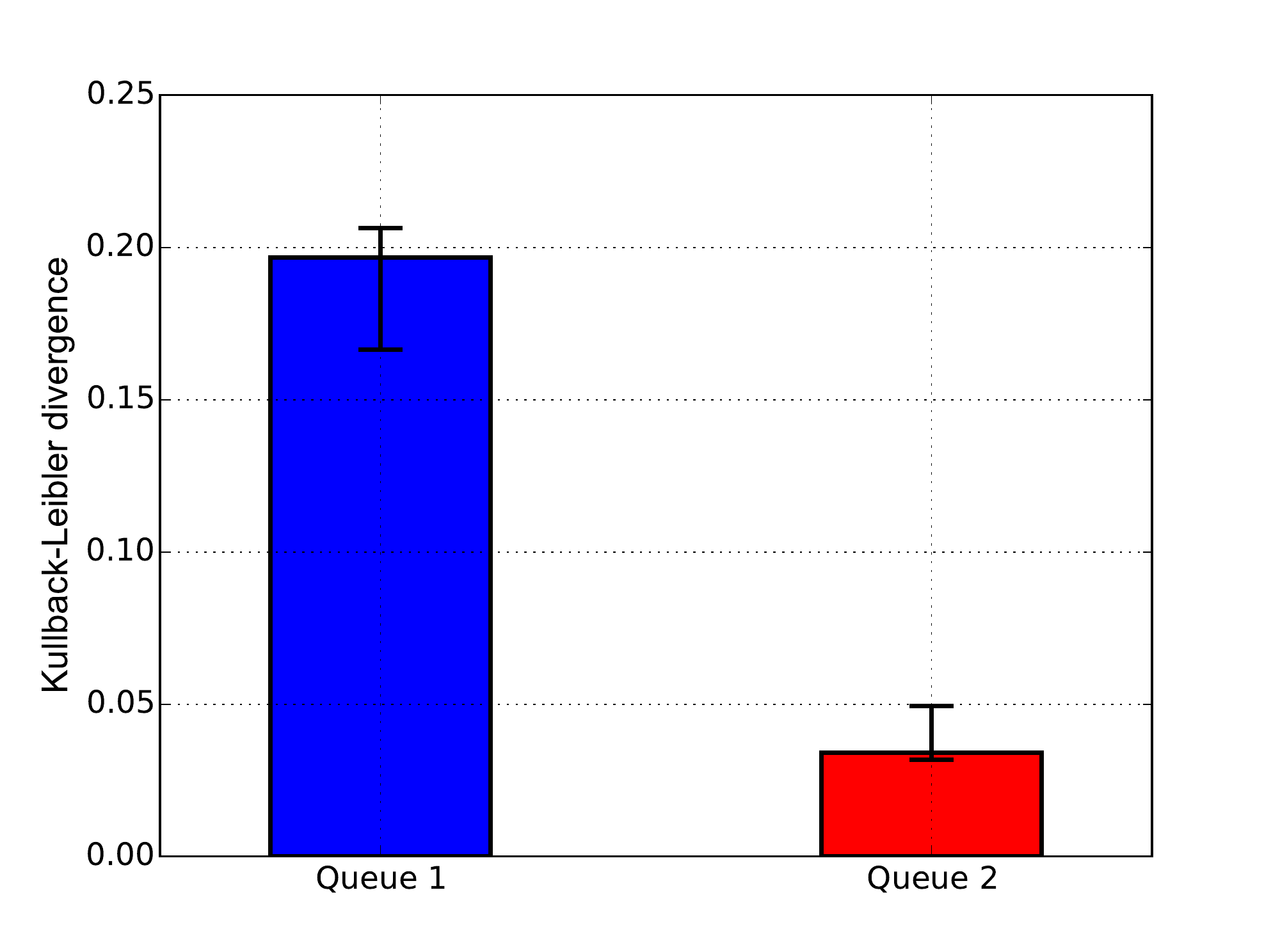}
		\caption{The Kullback-Leibler divergence of fitting the IAT distribution as geometric distribution,  500 random strategies tested.}
		\label{subfig:geometric_acceptance_KLD}
	\end{subfigure}
	\caption{The IAT of every individual queue under an arbitrary strategy is geometrically distributed.}
	\label{fig:geometric_acceptance}
\end{figure}

Furthermore, to verify the impact of impatient tenants' behavior, we activate the mechanisms of balking and reneging, and repeat the aforementioned simulation procedure in both scenarios $1$ and $2$. The results are illustrated in Fig.~\ref{fig:geometric_acceptance_balking_renaging}. Compared to the case of patient tenants, we can observe an increase of KLD in both scenarios here, especially in scenario $2$, confirming our assertion that the behaviors of balking and reneging will remove the Markovian feature of the system. However, when the balking and reneging rates are low (e.g., when the queues are short such like in scenario $1$), such impact can be slight enough to be neglected. 
\begin{figure}[tp]
	\centering
	\begin{subfigure}{.45\textwidth}
		\centering
		\includegraphics[width=\textwidth]{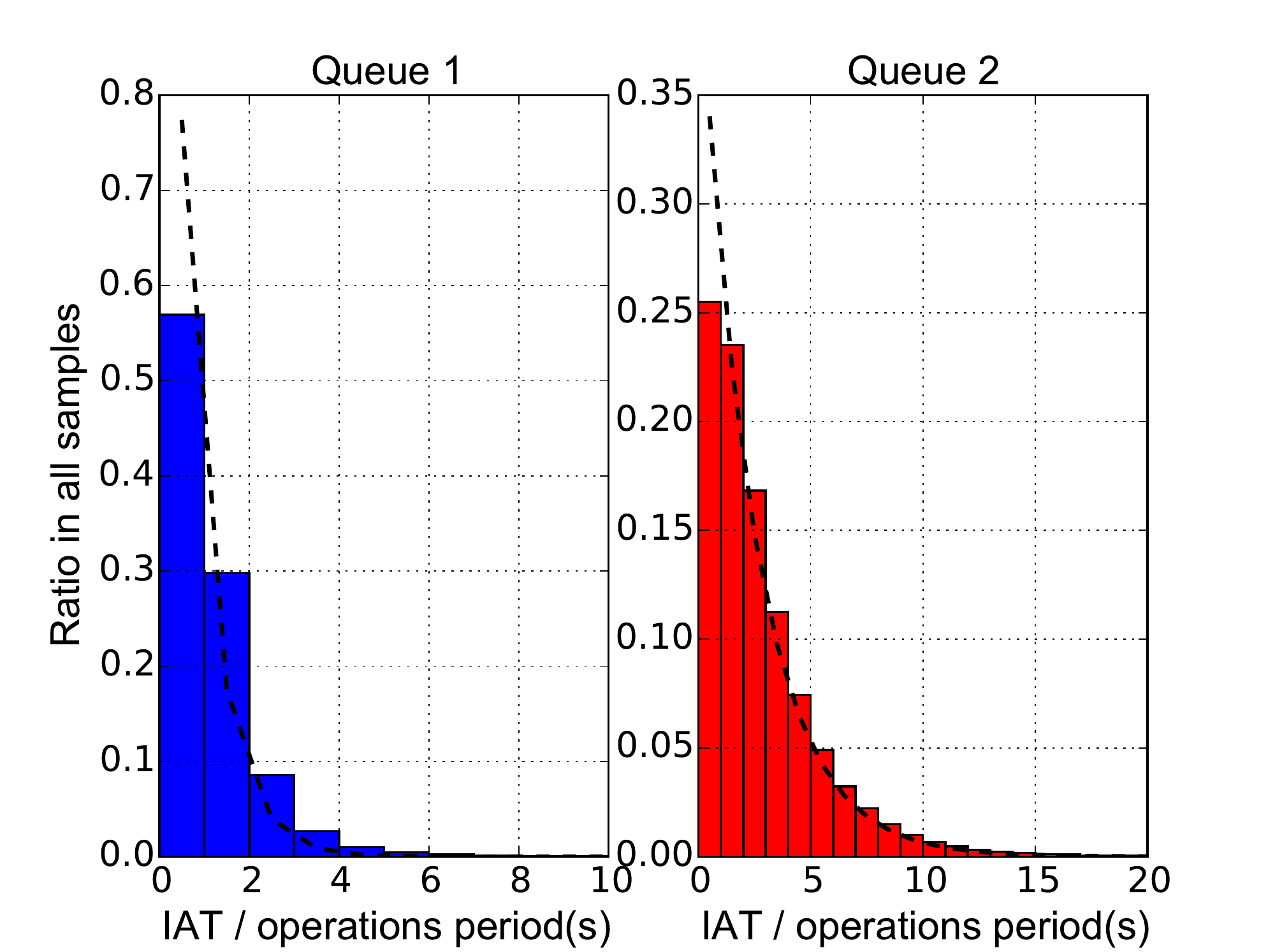}
		\caption{IAT distributions in scenario 1 under a random strategy, fitted as geometric distributions.}
	\end{subfigure}
	\begin{subfigure}{.45\textwidth}
		\centering
		\includegraphics[width=\textwidth]{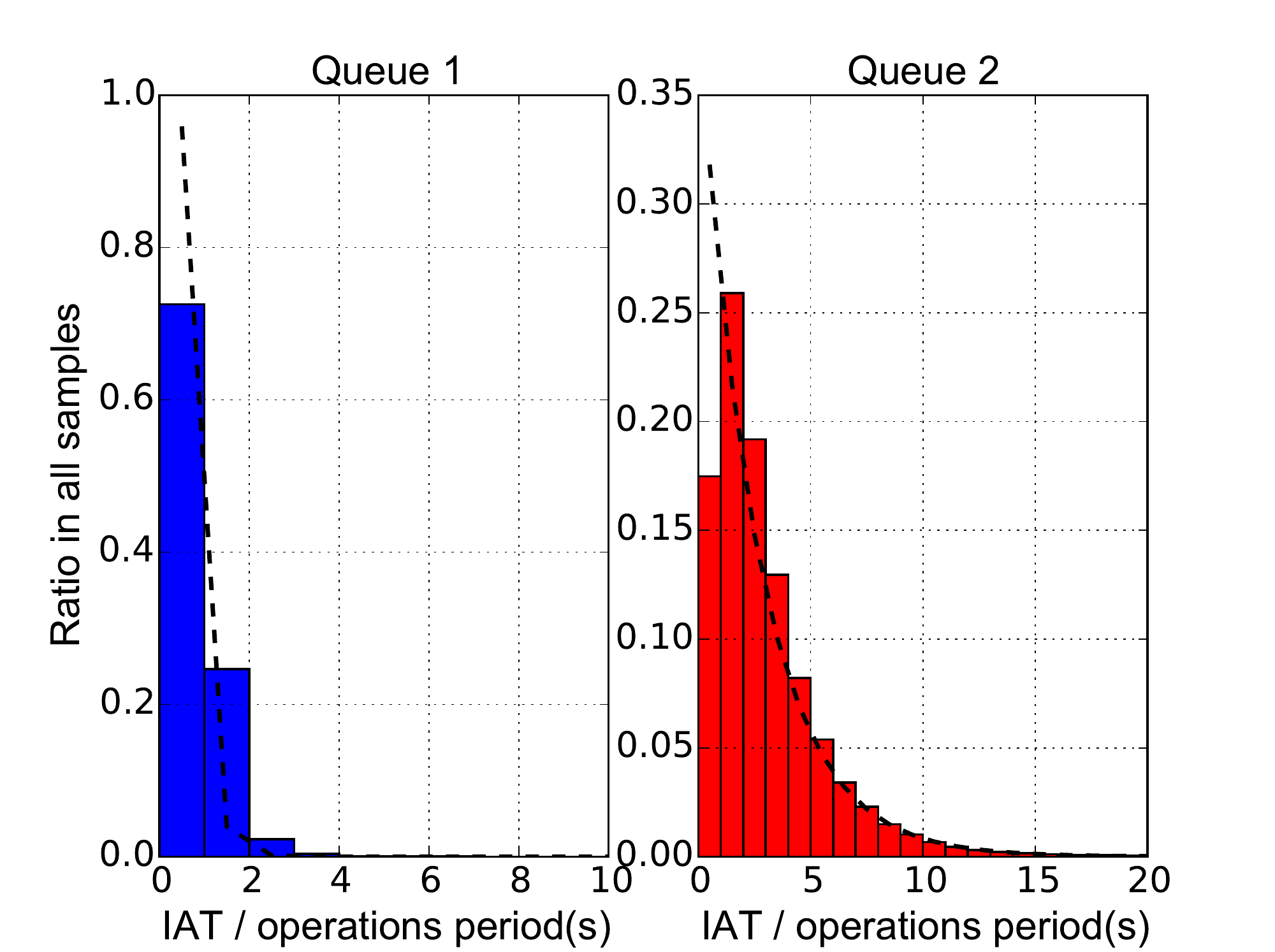}
		\caption{IAT distributions in Scenario 2 under a random strategy, fitted as geometric distributions.}
	\end{subfigure}
	\begin{subfigure}{.43\textwidth}
		\centering
		\includegraphics[width=\textwidth]{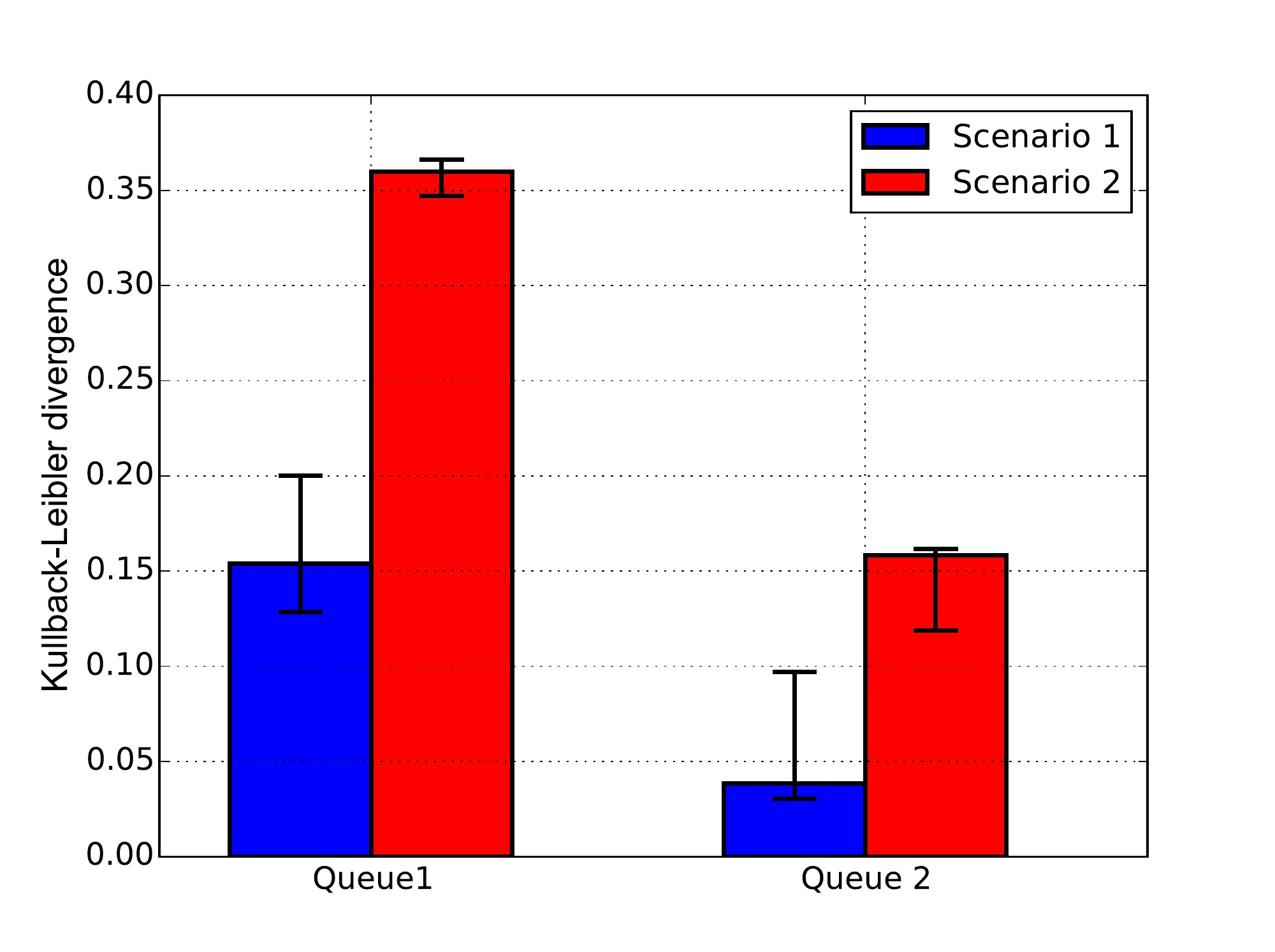}
		\caption{The KLD of fitting the IAT distribution as geometric distribution, in different scenarios.}
	\end{subfigure}
	\caption{Balking and renaging lead to non-Poisson admissions, the impact increases with the balking and renaging rates.}
	\label{fig:geometric_acceptance_balking_renaging}
\end{figure}

\subsection{Evaluation of the proposed controller}
To verify the effectiveness and potential in optimization of the proposed multi-queue slice admission controlling mechanism, we generate $10~000$ random strategies, and measure all three above-mentioned performances metrics $\overline{u}_\Sigma$, $\overline{W}_\text{q}$ and $\overline{P}(A)$ for every strategy in both reference scenarios $1$ and $2$. Similar to the last tests, every strategy is evaluated through a $20$-round Monte-Carlo test where each round begins with a random initial state and lasts $40$ operations periods. Impatient tenants are considered.

To provide benchmarks, we test the controller with two specific ``na\"ive'' strategies: \emph{Prefer Type 1}: the preference vector is $[1, 2, 0]$ at all system states; \emph{Prefer Type 2}: the preference vector is $[2, 1, 0]$ at all system states.
Moreover, we implement and test a simple ``greedy'' single-queue slice admission controller that always accepts the first request in its queue regardless of type, as long as the resource pool supports.

The results are illustrated in Fig.~\ref{fig:effectiveness}. It can be observed that the multi-queuing controller, when specified with an appropriate strategy, outperforms the greedy single-queue solution in admission rate, especially when the demand is dense and queues are congested. However, it shall be noted that the performances highly rely on the selection of strategy, leading to a critical necessity of strategy optimization.

\begin{figure}
	\centering
	\includegraphics[width=.4\textwidth]{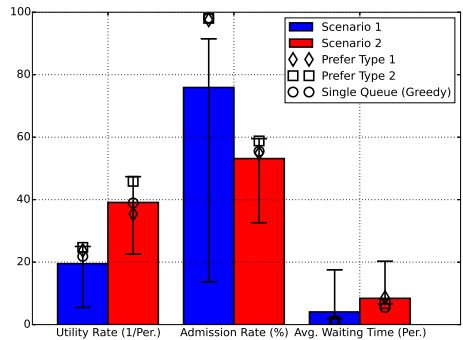}
	\caption{Performance distribution of the proposed multi-queue slice admission controller with $10~000$ random strategies, in comparison to selected benchmarks.}
	\label{fig:effectiveness}
\end{figure}

\change{\section{Further discussion}\label{sect:discussions}}
	%\subsection{Dynamic slicing model}
	In practical wireless networks, both the dynamics of resource availability (e.g. channel fading) and the resource elasticity of active slices must be taken into account. The model in this paper is an approximation with a static resource pool $\mathbf{r}$ and rigid slices, which holds in long-term with appropriate dynamic scheduling to multiplex slices. Note that such a slice multiplexing implicitly enables slice \emph{overbooking} with a risk to break SLAs~\cite{zanzi2018ovens,ovnes2018conext}. The challenge of balancing the multiplexing gain and the overbooking risk in heterogeneous multi-queue admission control settings deserves future study.

%	\subsection{Non-Poisson service scenario}
	It shall also be noticed that the assumptions of Poisson arrivals/releases may not hold in some practical service scenarios. In this case, the queues are not $M/M/1$ systems and cannot be considered as continuous-time Markov systems. Nevertheless, as pointed out in \cite{shortle2018fundamentals},
	%there is imbedded within many continuous-time non-Markov processes a discrete-time Markov chain, when observing the system state only at transition instants. 
	many such continuous-time non-Markov processes can be easily transformed into discrete-time Markov chains by observing only the state transitions.
	Therefore, the analyses given above also apply to most scenarios with non-Poisson request arrivals/releases.

\vspace{+2mm}
\section{Related work}
\label{sect:rel_work}
We summarize in the following the main research efforts in the literature on the topic of Slice-as-a-Service, queuing theory for cloud services and network slicing admission control. 

An overview on multi-tenancy service and 5G network slicing is given in~\cite{samdanis2016network} from perspectives of architecture and standardization, introducing the novel concept of \emph{network slice broker} which executes the admission control. Different attempts have been made in~\cite{sciancalepore2017slice,bega2017optimising} and~\cite{sciancalepore2017mobile} to demonstrate how admission control can benefit the network resource utilization. 

While we have considered network slicing in a generic and abstracted view, which is generally applicable in both radio access network (RAN) and core network (CN) domains, recently there has been a dense specific research interest for RAN slicing and its impact on radio resource management (RRM). On that \cite{sallent2017radio} and \cite{vo2018slicing} provide interesting solutions for efficient resource management and orchestration.
From the perspective of slicing admission strategy optimization, the methods reported in~\cite{bega2017optimising,sciancalepore2017slice,han2018slice} can be worthwhile to refer. Although all these works only consider a binary decision mechanism where declined requests simply vanish instead of being served after a delay, the algorithms deployed by them to solve ILP problems will inspire future development of model-less heuristic strategy optimizers for the proposed multi-queue slice admission controller.

SlaaS shall be considered as a specific type of public cloud environment, where service sessions can be categorized into multiple types with significantly heterogeneous resource demands. Queuing theory has been widely applied for cloud computing services to model the statistics of service demand and delivered quality of service (QoS), such as~\cite{vilaplana2014queuing} and \cite{chang2016modeling}. Especially, service schedulers with heterogeneous queues for different service types are discussed in~\cite{li2017qos} and~\cite{guo2018optimal}. These models provide valuable reference views in addition to the model proposed in this paper. 
Finally, balking and reneging behavior of impatient clients in queuing systems are extensively studied in~\cite{bocquet2005queueing,yue2008waiting}.
%~\cite{ancker1963some1,ancker1963some2,bocquet2005queueing} and \cite{yue2008waiting}.

Differing from the aforementioned works wherein a ``strategy'' usually represents the decision as a function of the system state, our study proposes a novel mechanism of multi-queuing slice admission control where the slicing strategy represents the MNO's preference of slice types in different system states. Besides, out paper also considers impatient tenants, which, from the best of our knowledge, has never been investigated in SlaaS environments.

\section{Conclusion}
\label{sect:concl}
The network slicing paradigm plays a key-role in the next generation of networks design. However, it involves a number of challenges while devising an admission control solution that takes into account complex network tenants behaviors. 

In this paper, we have proposed a multi-queue-based controller that automatically accounts for tenants waiting to get their requests network slices with given request frequency and patience characteristics. Our results validate the proposed model showing that unexpected tenants behaviors may be properly addressed with advanced admission control policies.

\section*{Acknowledgments}
%The work of B. Han and H. D. Schotten was supported in part by the European Union Horizon-2020 Project 5G-MoNArch under Grant Agreement 761445, in part by the Network for the Promotion of Young Scientists (TU-Nachwuchsring), Technische Universit\"at Kaiserslautern with individual funding. The work of V. Sciancalepore and X. Costa-Perez was supported by the European Union Horizon-2020 Project 5G-Transformer under Grant Agreement 761536.
This work has been partially funded by the European Union Horizon-2020 Projects 5G-MoNArch and 5G-Transformer under Grant Agreements 761445 and 761536 as well as by the Network for the Promotion of Young Scientists (TU-Nachwuchsring), TU Kaiserslautern with individual funding.

%\appendix[Sequence of Geometrically Distributed Events as Geometrically Distributed Event]\label{app:geom_seq}
%\begin{lemma}\label{lemma:geom_seq}
%	Given a finite set of random events $v_1,v_2,\dots,v_N$, whose number of arrivals are mutually independent geometric random processes $X_1(p_1),X_2(p_2),\dots,X_N(p_N)$, respectively. Then for an arbitrary ordered sequence $\mathbf{e}=[e_1,e_2,\dots,e_K]$ where $K\in\mathbb{N}$ and $e_i\in\{v_1,v_2,\dots,v_N\}, \forall i\in\{1,2,\dots,K\}$, its arrival counting process $X_\mathbf{e}$ is also geometric.
%\end{lemma}
%\begin{proof}
%	
%	\todo{I've done this on paper, need to clean it and put it here}
%\end{proof}
%%\appendices
%%\section{Proof of the First Zonklar Equation}
%%Appendix one text goes here.
%%\section{}
%%Appendix two text goes here.

%% use section* for acknowledgment
%\section*{Acknowledgment}
%
%
%The authors would like to thank...
%\todo{TU-Nachwuchsring, 5G-MoNArch(?), 5GAuRA(?)}

% Can use something like this to put references on a page
% by themselves when using endfloat and the captionsoff option.
%\ifCLASSOPTIONcaptionsoff
%  \newpage
%\fi

%\IEEEtriggeratref{8}
%\IEEEtriggercmd{\enlargethispage{-5in}}

% references section
\bibliographystyle{IEEEtran}
\bibliography{references}

% biography section
%\begin{IEEEbiography}[{\includegraphics[width=1in,height=1.25in,clip,keepaspectratio]{mshell}}]{Michael Shell}
\begin{IEEEbiography}{Bin Han}

\end{IEEEbiography}

%\vfill
\begin{IEEEbiography}{Vincenzo Sciancalepore}
	
\end{IEEEbiography}

%\vfill
\begin{IEEEbiography}{Hans D. Schotten}

\end{IEEEbiography}

% Can be used to pull up biographies so that the bottom of the last one
% is flush with the other column.
%\enlargethispage{-5in}

% that's all folks
\end{document}